\DeclareMathOperator*{\argmin}{arg\,min}
\newtheorem{theorem}{Theorem}
\newtheorem{assumption}{Assumption}
\newtheorem{lemma}{Lemma}
\newtheorem{definition}{Definition}
\newtheorem{corollary}{Corollary}
\newcommand*\samethanks[1][\value{footnote}]{\footnotemark[#1]}
\title{Traffic Optimization For a Mixture of Self-interested and Compliant Agents}
\author{{Guni Sharon}\textsuperscript{1}\thanks{These authors contributed equally.},
{Michael Albert}\textsuperscript{2}\samethanks,
{Tarun Rambha}\textsuperscript{3}\samethanks,
{Stephen Boyles}\textsuperscript{1},
{Peter Stone}\textsuperscript{1}\\
\textsuperscript{1}{University of Texas at Austin}\\
\textsuperscript{2}{Duke University}\\
\textsuperscript{3}{Cornell University}\\
gunisharon@gmail.com,
malbert@cs.utexas.edu,
tr244@cornell.edu,
sboyles@mail.utexas.edu, pstone@cs.utexas.edu
}
\begin{document}

\maketitle

\begin{abstract}

  This paper focuses on two commonly used path assignment policies for agents traversing a congested network: \textit{self-interested routing}, and \textit{system-optimum routing}.
  In the self-interested routing policy each agent selects a path that optimizes its own utility, while the system-optimum routing agents are assigned paths with the goal of maximizing system performance.
  This paper considers a scenario where a centralized network manager wishes to optimize utilities over all agents, i.e., implement a system-optimum routing policy.
  In many real-life scenarios, however, the system manager is unable to influence the route assignment of all agents due to limited influence on route choice decisions.
  Motivated by such scenarios, a computationally tractable method is presented that computes the minimal amount of agents that the system manager needs to influence (compliant agents) in order to achieve system optimal performance.
  Moreover, this methodology can also determine whether a given set of compliant agents is sufficient to achieve system optimum and compute the optimal route assignment for the compliant agents to do so.
  Experimental results are presented showing that in several large-scale, realistic traffic networks optimal flow can be achieved with as low as 13\% of the agent being compliant and up to 54\%.

  
\end{abstract}

\section{Introduction} 

In multiagent systems, there are generally two paradigms of interaction.
Centralized control paradigms assume that a single decision making
entity is able to dictate the actions of all the agents, thus leading
them to a coordinated social optimum.
Decentralized control paradigms, on the other hand,
assume that each agent selects its own actions, and while it is in
principle possible for them to act altruistically, they are generally assumed to be
self-interested.

In this paper, we consider a routing scenario in which a subset of agents are
controlled centrally (\emph{compliant agents}), while the remaining are \emph{self-interested agents}.
We model the system as a Stackelberg routing game \cite{yang2007stackelberg} in which the decision maker for the centrally controlled agents is the leader, and the self-interested agents are the followers.
In this paper, we provide a computationally tractable methodology for 1) determining whether a given subset of centrally controlled agents are sufficient to achieve system optimum ($SO$), 2) determining the maximum number of agent that may be self-interested such that the centrally controlled agents can be deployed in order to induce $SO$, and 3) computing the actions the leader should prescribe to a sufficient set of compliant agents in order to achieve $SO$.


It has been known for nearly a century in routing games that agents seeking to minimize their private latency need not minimize the total system's latency \cite{pigou2013economics,roughgarden2002bad}.
That is, self-interested agents may reach a user equilibrium ($U \! E$) that is not optimal from a system perspective.
However, if all agents are assigned paths with minimum system marginal cost then the system will achieve optimal performance~\cite{pigou2013economics,beckmann56,braess69}. 

Therefore, from a system manager perspective, it is desirable that all agents traversing a network would strictly utilize minimal marginal cost paths, even if the path is not a minimum latency path for an individual agent.
However, in many important scenarios, it will not be possible to enforce path assignment on all agents, but it may be possible to affect the behavior of a subset  (the compliant agents).
As a motivating example, consider an opt-in tolling system where drivers are given positive incentives to enroll but, in exchange, they will be subject to tolls that affect their route choice \cite{AAMAS17-delta}.
Another relevant example is virtual private network (VPN) path allocation.
While each packet within the VPN might be self-interested, a pro-social network manager might allocate virtual paths that are different from those preferred by the self-interested packets~\cite{fingerhut1997designing,duffield1999flexible}.

However, we show that, in the general case, computing the optimal assignment of compliant agents is NP-hard.
Therefore, we focus on the specific scenario where the portion of compliant agents is sufficiently large to achieve $SO$.
We present a novel \emph{linear program} ($LP$) representation for computing the maximal portion of self-interested agents that allow the system to achieve $SO$ and to determine whether a given set of compliant agents is sufficient to achieve $SO$.
Furthermore, we provide a method to tractably compute the flow assignment for the compliant agents such that $SO$ performance is guaranteed.

We demonstrate, using a standard traffic simulator over a wide range of road networks, that the number of compliant agents necessary to achieve system optimum is a relatively small percentage of total flow (between 13\% and 53\%).


\section{Motivation}  

 Recent advances in GPS based tolling technology \cite{numrich2012global} open the possibility of implementing micro-tolling systems in which specific tolls are charged for the use of every link within a road network.
 Setting tolls appropriately can influence self-interested drivers to prefer paths with minimum system marginal cost  and thus, lead to improved system performance \cite{AAMAS17-delta}.

 Unfortunately, political factors deter public officials from allowing such a micro-tolling scheme to be realized.
 Road pricing is known to cause a great deal of public unrest and is thus opposed by governmental institutions~\cite{schaller2010new}.
 To tackle this issue and avoid public unrest, it would be beneficial to have an \emph{opt-in} micro-tolling system where, given some initial monetary sign-up incentive, drivers choose to opt-in to the system and be charged for each journey they take based on their chosen route.
 The vehicles belonging to such drivers would need to be equipped with a GPS device as well as a computerized navigation system.
 Given the toll values and driver's value of time, the navigation system would suggest a minimal cost route where the cost is a function of the travel time and tolls. 

 While addressing the issue of political acceptance, an opt-in system  would result in traffic that is composed of a mixture of self-interested and compliant agents (compliant in the sense that the system manager can influence their route choice).
 Such a scenario raises some practical questions which are the focus of this paper,
namely, what portion of self-interested agents can the system tolerate while still reaching optimum performance? 
The answer to this question can help practitioners to determine both the level and the targeting of incentives in an opt-in system.

\section{Problem definition and terminology}

The terminology in this paper follows that of Roughgarden and Tardos (\citeyear{roughgarden2002bad}).
We review the relevant concepts and notation in this section.

\subsection{The flow model}

The flow model in this work is composed of a directed graph $G(V,E)$, and a demand function $R(s,t) \rightarrow \mathbb{R}^{+}$ mapping a pair of vertices $s,t \in V$ to a non-negative real number representing the required amount of flow between source, $s$, and target, $t$.\footnote{The demand between any source and target, $R(s,t)$, can be viewed as an infinitely divisible set of agents (also known as a non-atomic flow~\cite{rosenthal1973class}).}
An instance of the flow model is a $\{G,R\}$ pair. 

$\mathcal{P}_{s,t}$ denotes the set of acyclic paths from $s$ to $t$.
Define $\mathcal{P}$ as the collection of all $\mathcal{P}_{s,t}$ (i.e., $\cup_{s,t \in V} \mathcal{P}_{s,t}$).
The variable $f_p$ represents the flow volume assigned to path $p$.
Similarly, $f_e$ is the flow volume assigned to link $e$.
By definition, the flow on each link ($f_e$) equals the summation of flows on all paths of which $e$ is a part.
Define the system flow vector as $f=\mathrm{vect}\{ f_p \}$.
$f$ is said to be \textit{feasible} if for all $s,t \in V$, $\sum_{p \in \mathcal{P}_{s,t}} f_p=R(s,t)$. 

Each link $e \in E$ has a latency function $l_e(f_e)$ which, given a flow volume ($f_e$), returns the latency (travel time) on $e$.
Following Roughgarden and Tardos (\citeyear{roughgarden2002bad}) we make the following assumption:

\begin{assumption}
  \label{assum:latency_function}
The latency function $l_e(f_e)$ is non-negative,  differentiable, and non-decreasing for each link $e \in E$.
\end{assumption}

The latency of a simple path $p$ for a given flow $f$, is defined as $l_p(f) = \sum_{e \in p} l_e(f_e)$. 
A feasible flow $f$ is defined as a \textit{user equilibrium} ($U \! E$) if for every $s,t \in V$ and $p_a, p_b \in \mathcal{P}_{s,t}$ with $f_{p_a} > 0$ it holds that $l_{p_a}(f) \le l_{p_b}(f)$ (see Lemma 2.2 in \cite{roughgarden2002bad}). In other words, at $U \! E$, no amount of flow can be rerouted to a path with lower latency when the rest of the flow is fixed.

Define the system cost associated with link $e$ as $c_e(f_e) = l_e(f_e)f_e$, the cost of a path $p$ as $c_p(f) = \sum_{e \in p} c_e(f_e)$ and the cost of a flow $f$ as $c(f) = \sum_{e \in E} c_e(f_e)$.
Define $c'_e(x) = \frac{d}{dx} c_e(x)$ and $c'_p(f) = \sum_{e \in p} c'_e(f_e)$.
A feasible flow $f$ is defined as a \textit{system optimum} ($SO$) flow if for every $s,t \in V$ and $p_a, p_b \in \mathcal{P}_{s,t}$ with $f_{p_a} > 0$ it holds that ${c'}_{p_a}(f) \le c'_{p_b}(f)$ (see Lemma 2.5 in \cite{roughgarden2002bad}).
In other words, at $SO$, the benefit from reducing the flow along any path is always less than or equal to the cost of by adding the same amount of flow to a parallel, alternative path. 
We follow Roughgarden and Tardos (\citeyear{roughgarden2002bad}), and make the following assumption:

\begin{assumption}
  \label{assum:cost_convex}
The cost function $c_e(f_e)$ is convex for each link $e \in E$.
\end{assumption}

Assumptions \ref{assum:latency_function} and \ref{assum:cost_convex} imply that the set of $SO$ flows correspond to the set of solutions of a convex program where the objective is to minimize $c(f) = \sum_{e \in E} c_e(f_e)$ (see Roughgarden and Tardos (\citeyear{roughgarden2002bad}) Corollary 2.7).

\subsection{Problem Definition}


The focus of this paper is a scenario where the demand is partitioned into self-interested and compliant agents.
We define two types of controllers that assign paths to all of the agents.
These controllers are viewed as players in a Stackelberg game \cite{yang2007stackelberg}.

\begin{itemize}
\item $SO$-controller - Stackelberg leader, the $SO$-controller aspires to assign paths to the compliant subset of agents that, taking into account the self-interested agents' reaction, optimizes the systems performance (i.e. minimizes total latency).
  $SO$-controller assigned flow will be referred to as \emph{compliant flow}.
\item $U \! E$-controller - Stackelberg follower, considering the compliant agents' path assignment as fixed, the $U \! E$-controller assigns paths to the self-interested agents, the \emph{$U \! E$ flow}, such that a state of user equilibrium (as defined above) is achieved.\footnote{The $U \! E$ enforced by the $U \! E$-controller applies only for the self-interested subset of agents.
    That is, no \textbf{self-interested} agent can benefit from unilaterally deviating from its assigned path.} 
\end{itemize}

\noindent The problems addressed in this paper are:

\begin{enumerate}
\item Given an instance of the flow model $\{G,R\}$, what is the maximum amount of self-interested agents that can be assigned to the $U \! E$ controller and still permit the $SO$ controller to achieve system optimum? 
\item Given a set of compliant agents and an instance of the flow model $\{G,R\}$, can the $SO$ controller assign paths to them in such a way that the system achieves $SO$?
\item If $SO$ is achievable, how should the $SO$-controller assign the compliant flow? Equivalently, what is the optimal Stackelberg equilibrium?
\end{enumerate}

\noindent To the best of our knowledge, this work is the first to answer these questions in a general setting.

\section{Related Work}



Previous work examined mixed equilibrium scenarios where traffic is composed of: $U \! E$ and Cournot-Nash ($C \! N$) controllers.
A $C \! N$-controller assigns flows to a given subset of the demand with the aim of minimizing the total travel time only for that subset.
For instance, a logistic company with many trucks can be viewed as a $C \! N$-controller. 

It was shown that the equilibrium for a mixed $U \! E$, $C \! N$ scenario is unique and can be computed using a convex program~\cite{haurie1985relationship,yang2008existence}.
On the other hand, no tractable algorithm is known for computing the optimal Stackelberg equilibrium for scenarios that also include a $SO$-controller.

Korilis et. al. (\citeyear{korilis1997achieving}) examined mixed equilibrium scenarios that do include a $SO$-controller.
In their work, a technique for computing the a solution for the above questions \#1 and \#3 (see problem definition) was suggested for specific types of flow models.
Their technique was proven to work for networks with a common source and a common target with any number of parallel links.
Moreover, the latency functions were assumed to be of a very specific form (linear function with a capacity bound).
As a result, their solution is not applicable when general networks with arbitrary latency functions are considered. 

Other work \cite{roughgarden2004stackelberg,immorlica2009coordination} studied a variant of the scheduling problem where infinitesimal jobs must be assigned to a set of shared machines each of which is affiliated with a non-negative, differentiable, and non-decreasing latency function that, given the machine load, specify the amount of time needed to complete a job.
When considering a scenario where part of the jobs are assigned to machines by a $U \! E$-controller while the rest are assigned by a $SO$-controller, they show it is NP-hard to compute the optimal Stackelberg equilibrium~\cite{roughgarden2004stackelberg}.
Their problem can be viewed as a special case of our problem, specifically a network with a single source and target with multiple parallel links between them.
Given that in this more restrictive setting computing the optimal Stackelberg equilibrium is intractable, the general question in our setting will also be computationally intractable.

\section{Computing the Maximal $U \! E$ Flow}


Given that finding the optimal Stackelberg equilibrium is NP-hard for an arbitrary number of compliant agents, this work focuses on scenarios where the number of compliant agents is sufficient to achieve $SO$.
As we will show, finding the optimal Stackelberg equilibrium can be done in polynomial time for such cases.
In this section, we will present a computationally tractable method to compute the maximal $U \! E$ flow given an instance of a flow model $\{G,R\}$, and we will provide a method to check, for a given level of compliant flow, whether $SO$ is achievable.


We define $r^*_{U \! E}$ as the maximal amount of demand comprised of self-interested agents that the system can tolerate and still achieve $SO$.
Additionally, we define $r^*_{s,t}$ as the amount of demand from source $s$ to target $t$ that is assigned to the $U \! E$-controller.
That is, computing $r^*_{U \! E}$ is equivalent to maximizing $\sum_{s,t} r^*_{s,t}$.

We can cast the problem of maximizing $\sum_{s,t} r^*_{s,t}$ as an optimization problem, specifically a \emph{linear program} ($LP$).
Assigning values to all variables of type $r^*_{s,t}$ must follow some constraints.
Specifically, the flow from each origin to each destination must be both a \emph{subflow} of some $SO$ flow, and it must be an acceptable path for all self-interested agents in the flow, given the compliant flow.


\begin{definition}[Subflow of flow $f$]
  For a directed graph $G(V,E)$ and demand function $R$, a flow $f^*$ is a \emph{subflow of flow $f$} if for all links $e \in E$, $0 \le f^*_e \le f_e$ and for each pair of nodes $s,t \in V$, there exists $0 \le r_{s,t} \le R(s,t)$ such that
  \begin{equation*}
  \sum_{e \in out(s)} f^*_e -\sum_{e \in in(s)} f^*_e = \sum_t r_{s,t} \label{eq:demand_constraint}
\end{equation*}
and
  \begin{equation*}
  \sum_{e \in in(t)} f^*_e - \sum_{e \in out(t)} f^*_e = \sum_s r_{s,t} \label{eq:demand_constraint2}.
\end{equation*}

\end{definition}

A path will only be acceptable for a self-interested agent if it is the \emph{lowest latency path} from the origin to the destination given the compliant flow, and in order for the path to be part of a valid $SO$ flow, it must be a \emph{minimum marginal cost path}.
Therefore, a path $p$, leading from vertex $s$ to vertex $t$, will be said to be \textit{zero reduced cost} if there is no other path, $p'$, leading from $s$ to $t$ with lower latency or lower marginal cost. 

\begin{definition}[Zero reduced cost path] \label{def:zero_reduced}
  For a flow model $\{G,R\}$, a \emph{zero reduced cost path} with regard to flow assignment $f$ is a path $p \in \mathcal{P}_{s,t}$ such that $\forall \, p' \in \mathcal{P}_{s,t} : l_{p}(f) \le l_{p'}(f) ~\text{and}~ c'_{p}(f) \le c'_{p'}(f)$.
  A link, $e$, is defined as a \emph{zero reduced cost link} with respect to source $s$ if it is part of any zero reduced cost path originating from $s$ and terminating at $t$ for some origin-destination pair $(s,t) \in V^2$.
  We denote the set of zero reduced cost links with respect to source $s$ as $E^s_{RC}$
\end{definition}

We require that the $U \! E$ flow (flow routed by the $U \! E$-controller) is routed solely via zero reduced cost links/paths.
This is because the $U \! E$ controller can only assign flow to minimal latency paths (otherwise self-interested agents would deviate).
However, the need to constrain $U \! E$ flow to links/paths with minimal marginal cost ($c'$) is less intuitive; this constraint will be justified later on.
Note that it is sufficient to only consider whether or not a link $e$ is part of a reduced cost path from the origin $s$ to \emph{some} destination $t$ (not a specific $t$) because either link $e$ is along a reduced cost path from $(s,t)$, or there is no path only along links in $E^s_{RC}$ that includes $e$.
Moreover, we can efficiently compute the set of zero reduced cost links for any origin destination pair $(s,t)$ by applying uniform cost search from $s$ to $t$ and marking all links that are part of optimal paths, once with regard to minimal total latency ($\argmin_{p \in \mathcal{P}_{s,t}}(l_p(f^{SO}))$, and second with regard to minimal marginal cost ($\argmin_{p \in \mathcal{P}_{s,t}}(c'_p(f^{SO}))$.



Let the constant $f^{SO}$ denote the flow vector at a $SO$ solution.\footnote{A $SO$ flow can be efficiently computed as a solution to a convex program \cite{roughgarden2002bad,dial2006path}.}
The $SO$ flow is not unique when latency functions are non-decreasing, and the maximal amount of $U \! E$ flow permitted may, in general, depend on the specific $SO$ flow.
Therefore, we must efficiently search over the space of $SO$ flows.
This is possible due to the following lemmas.

\begin{lemma}
  \label{lemma:constant_latency_SO}
  For any two flows that achieve SO, $f^{SO}$ and $\hat{f}^{SO}$, $l_e(f^{SO}_e) = l_e(\hat{f}_e^{SO})$.
\end{lemma}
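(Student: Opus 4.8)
The plan is to exploit the fact, quoted from Roughgarden and Tardos, that under Assumptions 1 and 2 the set of SO flows is precisely the solution set of the convex program $\min_f c(f) = \sum_{e \in E} c_e(f_e)$ over the polytope of feasible flows. The key structural observation is that although the feasible flow vector $f^{SO}$ is not unique, the objective value $c(f^{SO})$ is, since it is the minimum of the program. I would then show that the edge latencies are pinned down by combining this with the convexity of each $c_e$ and the monotonicity of each $l_e$.

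First I would set up the following: let $f^{SO}$ and $\hat f^{SO}$ be two SO flows, and consider their midpoint $g = \tfrac12 f^{SO} + \tfrac12 \hat f^{SO}$, which is again feasible by convexity of the feasible region. Since each $c_e$ is convex, $c_e(g_e) \le \tfrac12 c_e(f^{SO}_e) + \tfrac12 c_e(\hat f^{SO}_e)$, and summing over $e$ gives $c(g) \le \tfrac12 c(f^{SO}) + \tfrac12 c(\hat f^{SO}) = c(f^{SO})$. But $c(f^{SO})$ is the minimum value, so equality must hold, and hence equality holds in the convexity inequality for \emph{every} edge $e$: $c_e\bigl(\tfrac12 f^{SO}_e + \tfrac12 \hat f^{SO}_e\bigr) = \tfrac12 c_e(f^{SO}_e) + \tfrac12 c_e(\hat f^{SO}_e)$. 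In other words, $c_e$ is affine on the segment $[\min(f^{SO}_e,\hat f^{SO}_e),\ \max(f^{SO}_e,\hat f^{SO}_e)]$.

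Next I would translate affineness of $c_e$ on that segment into the claim $l_e(f^{SO}_e) = l_e(\hat f^{SO}_e)$. If $f^{SO}_e = \hat f^{SO}_e$ there is nothing to prove, so assume without loss of generality $f^{SO}_e < \hat f^{SO}_e$. On the interval $[f^{SO}_e, \hat f^{SO}_e]$ we have $c_e(x) = l_e(x)\, x = ax + b$ for constants $a,b$. Evaluating at the two endpoints and also using that $c_e$ is affine there, one gets $l_e(x) = a + b/x$ on the interval; but $l_e$ is non-decreasing (Assumption 1) while $a + b/x$ is monotone in the opposite direction to the sign of $b$ — it is non-increasing when $b \ge 0$ and strictly increasing only when $b < 0$. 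I would handle the two cases: if $b < 0$ then $l_e$ is strictly increasing on the interval, which is consistent, so I need a finer argument; here I instead use $c_e(0) = l_e(0)\cdot 0 = 0$, forcing $b = 0$ once we know $c_e$ is affine all the way down — but affineness was only established on $[f^{SO}_e,\hat f^{SO}_e]$, not down to $0$. The cleaner route is: from $c_e(x) = ax+b$ on the interval, $l_e(x) = a + b/x$; combined with $l_e$ non-decreasing this forces $b \le 0$; combined with $l_e \ge 0$ (Assumption 1) it forces $a \ge 0$; and then $l_e(x) = a + b/x$ with $b \le 0$ is non-decreasing, so this case is genuinely possible only if we rule it out — which we do by noting the marginal-cost/convexity structure forces $c_e' $ to be constant $=a$ on the interval, and then a separate appeal to the SO optimality conditions will be needed, OR, more simply, the lemma as stated may implicitly need $b=0$. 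I would resolve this by observing that $c_e$ affine on $[f^{SO}_e,\hat f^{SO}_e]$ together with $c_e$ convex on all of $[0,\infty)$ and $c_e(0)=0$ implies $c_e(x)=ax$ throughout $[0,\hat f^{SO}_e]$ only if $f^{SO}_e=0$; otherwise I directly get a contradiction with strict convexity unless $l_e$ is constant — so the honest statement is that $l_e$ is constant on $[f^{SO}_e,\hat f^{SO}_e]$, giving $l_e(f^{SO}_e)=l_e(\hat f^{SO}_e)$.

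The main obstacle, then, is exactly the last step: passing from "$c_e$ is affine on the segment joining the two SO edge-flows" to "$l_e$ takes the same value at the two endpoints." Affineness of $c_e(x) = x\,l_e(x)$ does not by itself force $l_e$ to be constant (e.g.\ $c_e(x) = x$ has $l_e(x)=1$, fine, but $c_e(x)=x$ could also arise as the affine interpolant of a non-constant $l_e$ in principle). The resolution I would pursue is to differentiate: on the open segment, $c_e'(x) = l_e(x) + x\, l_e'(x)$ is constant; since $l_e$ is non-decreasing, $l_e' \ge 0$, and since $l_e \ge 0$, both summands are non-negative, so $c_e'$ constant forces both $l_e(x)$ and $x l_e'(x)$ individually constant on the segment (a sum of two non-negative non-decreasing-ish terms being constant — here I'd argue that $l_e$ non-decreasing makes $l_e(x)$ non-decreasing and $x l_e'(x)$ is also effectively non-decreasing in the relevant weak sense, so their sum is constant only if each is), whence $l_e$ is constant on $[f^{SO}_e, \hat f^{SO}_e]$ and the lemma follows. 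I expect this differentiability-and-monotonicity bookkeeping to be the only delicate part; everything upstream (midpoint feasibility, convexity giving edgewise equality) is routine.
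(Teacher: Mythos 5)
Your overall route is the same as the paper's: use the fact that the SO flows form the solution set of a convex program, take a convex combination of two SO flows, observe that equality must then hold edgewise in the convexity inequality, conclude that $c_e$ is affine on the segment between $f^{SO}_e$ and $\hat f^{SO}_e$, and finally try to deduce that $l_e$ is constant there. Everything up to and including the affineness of $c_e$ is correct and matches the paper. You also correctly single out the crux --- affineness of $c_e(x) = x\,l_e(x)$ does not by itself make $l_e$ constant --- a point the paper's own proof dispatches in one unargued sentence.

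However, your proposed resolution of that crux does not go through. You argue that since $c'_e(x) = l_e(x) + x\,l'_e(x)$ is constant on the segment and both summands are non-negative, each summand must be individually constant. That inference needs both summands to be non-decreasing, and $x\,l'_e(x)$ need not be. Concretely, take $l_e(x) = x$ on $[0,1]$ and $l_e(x) = 2 - 1/x$ on $[1,\infty)$: this $l_e$ is non-negative, non-decreasing and differentiable; $c_e(x) = x\,l_e(x)$ is convex (it equals $x^2$ and then $2x-1$, with matching derivative $2$ at $x=1$); and on $[1,\infty)$ one has $c'_e(x) = (2 - 1/x) + x\cdot(1/x^{2}) \equiv 2$, constant, while $l_e$ is strictly increasing there --- the term $x\,l'_e(x) = 1/x$ is strictly decreasing and exactly cancels the growth of $l_e$. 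So the step ``sum constant $\Rightarrow$ each term constant'' fails, and with it the last step of your argument. Worse, a network of two parallel links with this latency on both and demand $4$ has SO solutions at every split $x \in [1,3]$, with $l_e(1) = 1 \ne 5/3 = l_e(3)$, so the difficulty is not mere bookkeeping: no argument from Assumptions 1--2 plus affineness of $c_e$ alone can close it. The paper's proof makes the same unjustified leap (asserting that a non-decreasing $l_e$ with linear $c_e$ must be constant); you at least noticed where the problem lies, but your fix does not repair it.
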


\begin{proof}
Given Assumption~\ref{assum:cost_convex}, a $SO$ flow is the solution to a convex program \cite{roughgarden2002bad}. The solutions to a convex program form a convex set. Suppose that there are two flows that both achieve SO, but for which $f_e^{SO} \ne \hat{f}_e^{SO}$. Then $c_e(f_e) = l_e(f_e)f_e$ must be a linear function between $f_e^{SO}$ and $\hat{f}_e^{SO}$ (to see this, note that any convex combination of $f^{SO}$ and $\hat{f}^{SO}$ is also an $SO$ solution, but if $c_e(f_e)$ is not linear, then the total system travel time would be strictly less, a contradiction). Since $l_e(f_e)$ is a non-decreasing function, the only way for $c_e(f_e)$ to be linear is for $l_e(f_e)$ to be constant between $f_e^{SO}$ and $\hat{f}_e^{SO}$.
\end{proof}

\begin{lemma}
  \label{lemma:identical_reduced_cost_paths}
  The set of zero reduced cost paths is identical for all $SO$ solutions.
\end{lemma}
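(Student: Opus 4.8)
The plan is to show that whether a path $p \in \mathcal{P}_{s,t}$ is zero reduced cost depends only on quantities that are invariant across SO solutions. Recall that $p$ is zero reduced cost with respect to $f^{SO}$ exactly when $l_p(f^{SO}) \le l_{p'}(f^{SO})$ and $c'_p(f^{SO}) \le c'_{p'}(f^{SO})$ for every competing path $p' \in \mathcal{P}_{s,t}$. The latency condition is handled immediately by Lemma 1: since $l_e(f^{SO}_e) = l_e(\hat f^{SO}_e)$ for every link $e$, we have $l_p(f^{SO}) = l_p(\hat f^{SO})$ for every path $p$, so the set of minimum-latency paths is the same for $f^{SO}$ and $\hat f^{SO}$. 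So the only real work is to show the marginal costs $c'_e(f^{SO}_e)$ — or at least the relative ordering of path marginal costs $c'_p(\cdot)$ between $s$ and $t$ — are also invariant.

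First I would exploit the proof of Lemma 1 more carefully. There it was established that for any link $e$ on which two SO flows disagree, $c_e$ is affine on the segment $[\min(f^{SO}_e,\hat f^{SO}_e), \max(f^{SO}_e,\hat f^{SO}_e)]$ and $l_e$ is constant there; write $l_e \equiv L_e$ on that segment. Then on that segment $c_e(x) = L_e x + b_e$ for constants, so $c'_e(x) = L_e$ is constant on the open segment. The subtlety is the endpoints: $c'_e$ need only be one-sided-continuous, so I should argue using the optimality (KK/variational) characterization rather than pointwise derivative values. Concretely, the variational inequality characterizing SO (Roughgarden–Tardos Lemma 2.5, already cited) says $f^{SO}$ is SO iff for all $s,t$ and all $p_a,p_b\in\mathcal P_{s,t}$ with $f^{SO}_{p_a}>0$ we have $c'_{p_a}(f^{SO})\le c'_{p_b}(f^{SO})$; equivalently there is a potential $\pi^{SO}_{s,t}$ with $c'_p(f^{SO}) \ge \pi_{s,t}$ for all $p\in\mathcal P_{s,t}$, with equality on used paths. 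I would show $\pi_{s,t}$ is the same for every SO solution — it equals the optimal value $\min_p c'_p(f^{SO})$, and I claim this min is invariant.

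The key step, and the main obstacle, is pinning down $c'_e(f^{SO}_e)$ on links where $l_e$ is genuinely constant over a range but $c'_e$ jumps at the realized value of $f^{SO}_e$. My plan to handle this: take a convex combination $f^\lambda = \lambda f^{SO} + (1-\lambda)\hat f^{SO}$, which is again SO (convexity of the SO solution set, Corollary 2.7). On every link, $c_e$ is affine along this whole segment, hence $c'_e$ takes a single constant value $\gamma_e$ on the open segment $\lambda\in(0,1)$; this $\gamma_e$ is therefore common to the "interiors" of both $f^{SO}$ and $\hat f^{SO}$. Because $c_e$ is convex, $c'_e$ is nondecreasing, so $c'_e(f^{SO}_e) = \gamma_e$ unless $f^{SO}_e$ is the left endpoint of the affine segment (where the left derivative could be smaller). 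But a left-endpoint link is one where $f^{SO}_e \le \hat f^{SO}_e$ for *every* other SO flow — i.e., $f^{SO}$ minimizes flow on $e$. I would then argue that such a link, if it lies on a used path of $f^{SO}$, still satisfies the same variational (in)equalities: on used paths we need $c'_{p_a} \le c'_{p_b}$, and replacing the possibly-smaller left derivative only makes the used-path cost smaller, which cannot break the inequality for $f^{SO}$ and, combined with the reverse inequalities holding for $\hat f^{SO}$ (where $c'_e = \gamma_e$), forces the path-sum $\sum_{e\in p} c'_e(f^{SO}_e)$ for any $p\in\mathcal P_{s,t}$ to have the same minimum value $\pi_{s,t}$ and the same minimizing set of paths. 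Hence the set $\{p : l_p \text{ minimal and } c'_p \text{ minimal}\}$ — the zero reduced cost paths — is identical for $f^{SO}$ and $\hat f^{SO}$, and since these were arbitrary SO flows, the lemma follows.

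I expect the endpoint/one-sided-derivative bookkeeping to be the fussiest part; a clean way to sidestep it entirely would be to observe that for the purposes of Definition 3 only the *ordering* of $c'_p(f^{SO})$ among paths in $\mathcal P_{s,t}$ matters, and that this ordering is determined by $\arg\min_{p\in\mathcal P_{s,t}} c'_p(f^{SO})$, which by the SO optimality condition coincides with the support of the $(s,t)$ flow — and supports can differ between SO solutions, so I must instead phrase it as: a path is zero-reduced-cost iff it carries flow in *some* SO solution routing only on min-latency links, which by the convex-combination argument is an SO-invariant set. I would present the convex-combination argument as the backbone and relegate the endpoint case to a short remark.
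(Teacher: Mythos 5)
Your backbone is the paper's proof: minimum-latency paths are invariant by Lemma~\ref{lemma:constant_latency_SO}, and marginal costs are invariant because any two $SO$ flows can only differ on links where $l_e$ is constant over the interval between the two flow values, so $c'_e$ is constant there as well. Where you diverge is in the treatment of the interval endpoints, and this is where your write-up has a problem. The endpoint case you devote most of your effort to does not exist: Assumption~\ref{assum:latency_function} makes $l_e$ differentiable, hence $c_e(x) = x\,l_e(x)$ is genuinely differentiable with $c'_e(x) = l_e(x) + x\,l'_e(x)$, so there are no distinct one-sided derivatives to worry about. Moreover, if $l_e$ is constant on a nondegenerate closed interval, then the one-sided derivative computed from inside the interval is zero at each endpoint, and differentiability forces the two-sided derivative to agree with it; thus $l'_e = 0$ on the entire closed interval, endpoints included, and $c'_e(f_e) = l_e(f_e)$ equals the common constant latency at both $f^{SO}_e$ and $\hat f^{SO}_e$. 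That one observation is the whole of the paper's argument for the marginal-cost half, and it makes your convex-combination and potential-function machinery unnecessary.

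The reason this matters, rather than being a stylistic quibble, is that the patch you propose for the (nonexistent) endpoint case is not actually a proof. The claim that the variational inequalities at $f^{SO}$ and $\hat f^{SO}$ ``force the path-sum $\sum_{e\in p} c'_e(f^{SO}_e)$ to have the same minimum value $\pi_{s,t}$ and the same minimizing set of paths'' is asserted, not derived: if $c'_e(f^{SO}_e)$ were strictly smaller than $\gamma_e$ on some link, a path through that link could become strictly cheaper under $f^{SO}$ than under $\hat f^{SO}$ and change the argmin set, and nothing in your sketch rules this out. Your closing fallback (``a path is zero reduced cost iff it carries flow in some $SO$ solution routing only on min-latency links'') is likewise not Definition~\ref{def:zero_reduced} and would itself need proof. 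So as written, the only complete route through your argument is the one that invokes differentiability of $l_e$ --- which you never do. Add that invocation and delete the rest; the result is the paper's proof.
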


\begin{proof}
  By Lemma~\ref{lemma:constant_latency_SO}, all $SO$ flows have the same latency on each link, so the $SO$ solutions can differ by at most flows along a set of links with constant latency over the range of which the two flows differ on those links.
  Since we assume that the latency functions are differentiable, the derivatives of the latency function are zero over the range at which they are constant. Therefore, $c_e'(f_e) = l_e(f_e) + f_e l'_e(f_e)$ is constant over the range as well.
  This implies that any path that is reduced cost in one flow is also reduced cost in the other flow, since the latency functions and $c_e'(f_e)$ are constant for every link $e$.
\end{proof}


Define the constant $\bar{f}_e^{SO} = \mathrm{sup}\{f : l_e(f) = l_e(f^{SO}_e)\}$, i.e. $\bar{f}_e^{SO}$ is the largest flow value such that the latency on link $e$ is equal to the latency at an $SO$ solution.
Note that if $l_e$ is strictly increasing at $f^{SO}_e$, then $\bar{f}_e^{SO} = f^{SO}_e$.
However, if $l_e$ is constant at $f^{SO}_e$, then $\bar{f}_e^{SO} > f^{SO}_e$.

Given that the zero reduced cost paths are the same for \emph{all} $SO$ flows (Lemma~\ref{lemma:identical_reduced_cost_paths}), and any $SO$ flow has the same latency on all links (Lemma~\ref{lemma:constant_latency_SO}), it will be sufficient to only search over flows that are less than $\bar{f}_e^{SO}$ on each link $e \in E$.


For each vertex, $s$, and link, $e$, define variable $x^s_{e}$ denoting the amount of $U \! E$ flow originating from source $s$ that is assigned to link $e$. 
Let $in(v)$ denote the set of links for which $v$ is the tail vertex and $out(v)$ the set of links for which $v$ is the head vertex. 


\begin{definition}
\label{definition:UE_LP}
For a given flow model $\{G, R\}$, the \emph{$U \! E$ linear program} is:

\begin{equation}
\max_{r_{s,t}^*, x_e^s} ~ \sum_{s,t \in V} r^*_{s,t}  \label{eq:1}
\end{equation}
\begin{align}
&\textit{subject to} && \nonumber\\
&r^*_{s,t} \leq R(s,t) &&  \forall \, s,t \in V \label{eq:demandcon} \\ 
&\sum_{e \in out(s)} x^s_e  = \sum_{t \in V} r^*_{s,t}  && \forall \,s \in V  \label{eq:flowcon_s} \\ 
&\sum_{e \in in(t)} \! x^s_e - \sum_{e \in out(t)} \! x^s_e = r^*_{s,v}  &&  \forall \, s,  t \in V \label{eq:flowcon_t} \\ 
&\sum_{s} x_{e}^s \leq \bar{f}^{SO}_e && \forall \, e \in E, \,  \, s \in V  \label{eq:bundle}\\
&x_{e}^s \geq 0 ,~ r^*_{s,t} \geq 0  &&\forall \, s, t \in V, \, e \in E  \label{eq:nonnegative} \\
&x_{e}^s = 0 &&\forall \, s \in V, \, e \in E \setminus E^s_{RC}  \label{eq:non_reduced_zero_flow} 
\end{align}
The flow $f^{U \! E}_e = \sum_v x^v_e$ defined by a feasible solution to the $U \! E$ linear program (given constraints \eqref{eq:demandcon}-\eqref{eq:non_reduced_zero_flow}) is a \emph{$U \! E$ subflow}.
The flow defined by an optimal solution to the $U \! E$ linear program is an \emph{optimal $U \! E$ subflow}.
\end{definition}

Note that the number of variables is $|\{\forall s \in V, \, \forall t \in V, \, \forall e \in E : r^*_{s,t}, \, x^s_e \}| = O(|V|^2 + |V||E|)$, and  the number of constraints is also $O(|V|^2 + |V||E|)$.
Therefore, since the number of variables and constraints are polynomial in the flow model, the optimal solution to the $U \! E$ linear program can be computed in polynomial time \cite{Karmarkar-1984}.

\begin{figure}[tbp]
\centering
\includegraphics[width=\columnwidth]{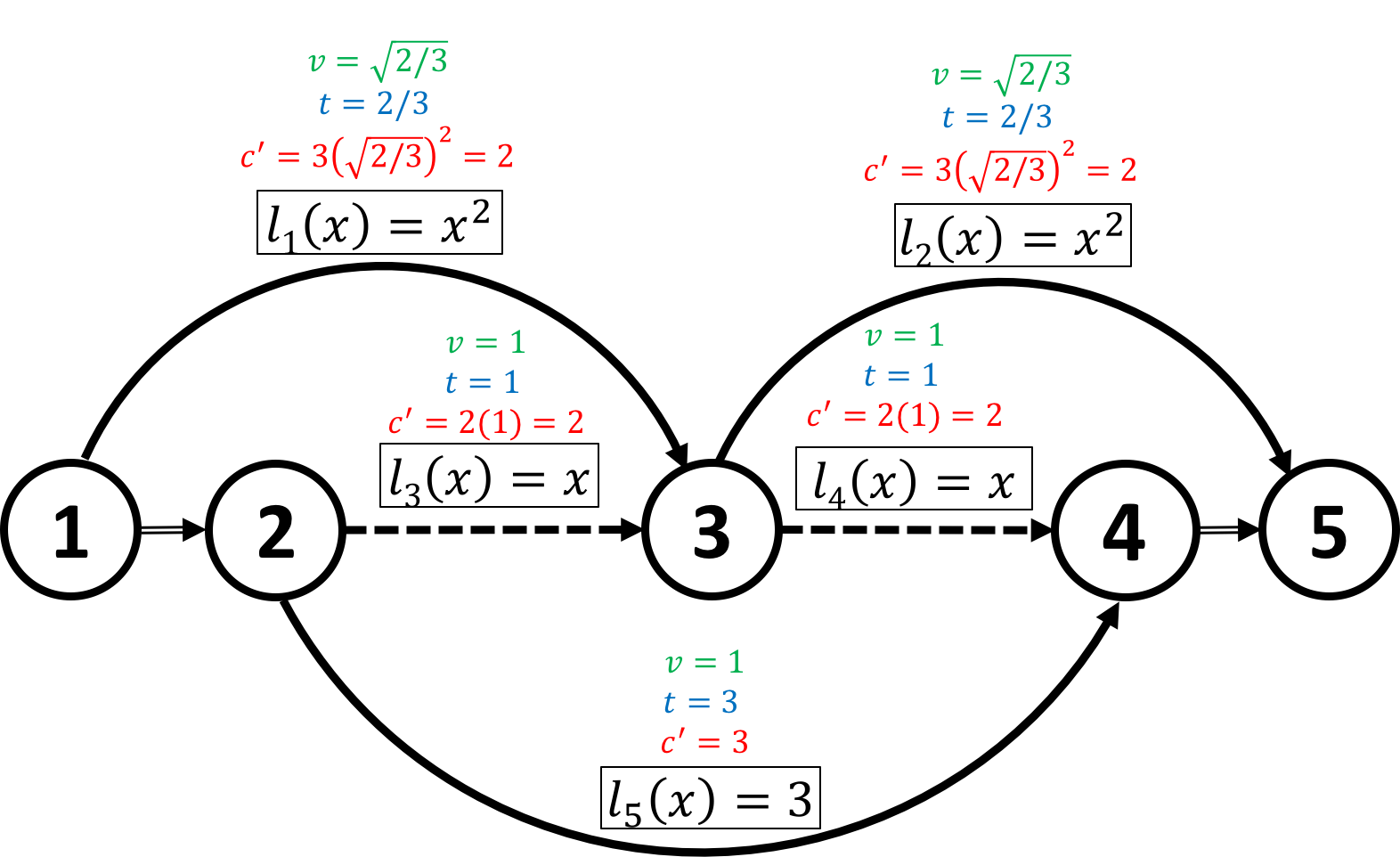}
\caption{A problem instance where the minimal marginal cost condition is required. $R(1,3)=R(3,5)=1+\sqrt{2/3},~R(2,4)=1$. Above each link are listed (in top to down order): $v$, the flow at $SO$ (in green), $t$, the travel time at $SO$ (in blue), $c'$, the marginal cost at $SO$ (in red), and $l(x)$, the latency function (in a bounding box).}
\label{fig:marg_cost_req}
\end{figure}

Revisiting the definition of a zero reduced cost path/link (Definition \ref{def:zero_reduced}), the need to constrain $U \! E$ flow to  minimal marginal cost paths (in addition to minimal latency paths) is explained by the fact that, at $SO$, the $U \! E$ flow must be a subflow of $f^{SO}$.
Consider the problem instance depicted in Figure \ref{fig:marg_cost_req}.
The latency function, flow at SO, travel time at SO, and marginal cost at $SO$ are all listed above each link.
The double line links have zero travel time regardless of the volume, their only purpose is to limit the possible path assignment between vertices $2$ and $4$.
Notice that, for flow traveling from vertex $2$ to $4$, the dotted path is of minimal latency but not minimal marginal cost.
Running the above $LP$ on this instance would result in $r^*_{U \! E} = 2 \cdot \sqrt{2/3}$ which is the correct value.
However, if the dotted path is considered to be of zero reduced cost with respect to source 2 (despite having a non minimal marginal cost) then running the $LP$ would result in $r^*_{U \! E} = 1 + 2 \cdot \sqrt{2/3}$. 


\begin{theorem}
\label{theorem:UE_flow_is_subflow}
  A $U \! E$ subflow, $f^{U \! E}$, defined by a feasible solution to the $U \! E$ linear program is a subflow of a $SO$ flow.
\end{theorem}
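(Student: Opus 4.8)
The plan is to separate the claim into an immediate conservation part and a substantive domination part. Constraints \eqref{eq:flowcon_s}--\eqref{eq:flowcon_t} are exactly the per-origin version of the conservation equations in the definition of a subflow of a flow, and \eqref{eq:demandcon}, \eqref{eq:nonnegative} give $0\le r^*_{s,t}\le R(s,t)$; summing the origin components, $f^{U\!E}_e=\sum_{s\in V}x^s_e$ is a feasible multicommodity flow routing $r^*_{s,t}$ units from $s$ to $t$. Thus $f^{U\!E}$ already satisfies every requirement of the subflow definition with witness demands $r_{s,t}:=r^*_{s,t}$, except possibly the domination $f^{U\!E}_e\le f_e$, and the whole task reduces to producing a single $SO$ flow $f$ with $f_e\ge f^{U\!E}_e$ on every link.

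To obtain $f$ I would fix an arbitrary $SO$ flow $f^{SO}$ and \emph{complete} $f^{U\!E}$ to a flow routing all of $R$: decompose $f^{SO}$ into origin-destination path flows, send the residual demand $R(s,t)-r^*_{s,t}$ of each pair along suitable minimum-marginal-cost paths --- a convenient first guess being $f^{SO}$'s own paths scaled by $(R(s,t)-r^*_{s,t})/R(s,t)$, re-routed away from links that $f^{U\!E}$ has already filled to $\bar f^{SO}_e$ --- and let $f$ be the resulting sum. By construction $f$ routes $R$ and $f\ge f^{U\!E}$, so it now suffices to show this $f$ is an $SO$ flow.

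For that I would use the characterisation of \cite{roughgarden2002bad} (their Lemma~2.5): $f$ is $SO$ iff every path carrying positive flow in $f$ is a minimum-marginal-cost path for $f$. The first ingredient is that $f$ has the same per-link marginal cost as $f^{SO}$: on a link with $f^{SO}_e<f_e$ we have $f^{SO}_e<f_e\le\bar f^{SO}_e$ --- the upper bound being precisely what constraint \eqref{eq:bundle} buys, provided the completion does not overfill --- so by Lemma~\ref{lemma:constant_latency_SO} the latency is constant on $[f^{SO}_e,\bar f^{SO}_e]$, and, as in the proof of Lemma~\ref{lemma:identical_reduced_cost_paths}, differentiability forces $l'_e=0$ and $c'_e=l_e$ constant there too (links with $f_e<f^{SO}_e$, and links $f$ does not use, are the delicate case deferred below). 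Granting this, the minimum-marginal-cost $s$--$t$ paths for $f$ coincide with those for $f^{SO}$. The second ingredient is that $f=f^{U\!E}+\psi$ uses only such paths: the residual part $\psi$ is supported on $f^{SO}$'s paths, which are minimum-marginal-cost since $f^{SO}$ is $SO$, and $f^{U\!E}$ is supported, by constraint \eqref{eq:non_reduced_zero_flow}, on the zero-reduced-cost links $E^s_{RC}$, which by Definition~\ref{def:zero_reduced} and Lemma~\ref{lemma:identical_reduced_cost_paths} lie on paths that are simultaneously minimum-latency \emph{and} minimum-marginal-cost. It is here that the marginal-cost clause of Definition~\ref{def:zero_reduced}, illustrated by Figure~\ref{fig:marg_cost_req}, is essential: without it, $f^{U\!E}$ could be routed along a minimum-latency path of non-minimal marginal cost and then lie on no $SO$ flow. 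Together the two ingredients and Lemma~2.5 give that $f$ is $SO$, and $f\ge f^{U\!E}$ closes the proof.

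I expect the main obstacle to be exactly the links on which the completed flow falls strictly below $f^{SO}$ (or links used by $f$ but unused by $f^{SO}$): there the latency need not be flat, $c'_e(f_e)$ may drop below $c'_e(f^{SO}_e)$, the set of minimum-marginal-cost paths can strictly grow, and the second ingredient breaks. Resolving this should come down to choosing $f^{SO}$ and routing the residual demand so that $f^{U\!E}_e+\psi_e$ stays inside the ``$SO$-latency window'' $[\,\inf\{x:l_e(x)=l_e(f^{SO}_e)\},\ \bar f^{SO}_e\,]$ on every link it uses (and at $0$ elsewhere), which is a network-flow feasibility statement I would prove with a Hoffman-type circulation/cut argument, exploiting that $f^{SO}$ already routes $R$ within this window and that, by \eqref{eq:bundle} and \eqref{eq:non_reduced_zero_flow}, $f^{U\!E}$ already lies below $\bar f^{SO}_e$ and only on reduced-cost links.
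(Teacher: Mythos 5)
Your reduction of the claim to producing a single $SO$ flow that dominates $f^{U\!E}$ link-by-link is exactly right, and the conservation half is handled the same way the paper handles it. But as written the proof has a genuine gap: the entire content of the theorem sits inside the ``network-flow feasibility statement'' you defer to an unspecified Hoffman-type argument in the last paragraph. You correctly observe that the naive completion (scaling $f^{SO}$'s own paths and ``re-routing away from'' saturated links) can push some link strictly below $f^{SO}_e$ or onto links $f^{SO}$ does not use, at which point the marginal costs at the completed flow no longer agree with those at $f^{SO}$ and Lemma~2.5 of Roughgarden and Tardos no longer certifies optimality. Your proposed repair --- keep $f_e$ inside the window $[\inf\{x: l_e(x)=l_e(f^{SO}_e)\},\,\bar f^{SO}_e]$ --- is not by itself sufficient either: a flow that routes all of $R$ and stays in that window on every link need not be $SO$ (two parallel constant-latency links with latencies $1$ and $2$ both have window $[0,\infty)$, yet sending the demand down the slower link satisfies the window and is not optimal). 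So you need the window condition \emph{and} support on minimum-marginal-cost paths with respect to the completed flow itself, and you must prove that a completion satisfying both exists simultaneously with $f\ge f^{U\!E}$. Verifying the cut condition of a Hoffman-type theorem for that constrained completion is precisely where the work lies, and it is not done.

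The paper closes this gap by going in the opposite direction: it starts from a known $SO$ flow $\hat f^{SO}$ and repairs it, rather than completing $f^{U\!E}$. On links with strictly increasing latency, constraint \eqref{eq:bundle} already gives $f^{U\!E}_e \le \bar f^{SO}_e = f^{SO}_e$, so any failure of domination lives on links whose latency is constant at $SO$. For an origin--destination pair where domination fails, flow in $\hat f^{SO}$ is shifted between paths that differ only on such constant-latency links while staying below $\bar f^{SO}_e$; by Lemma~\ref{lemma:constant_latency_SO} this leaves the total cost unchanged, so the shifted flow is still $SO$, and iterating yields an $SO$ flow dominating $f^{U\!E}$. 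Because optimality is preserved by construction at every step, the paper never has to re-certify the final flow as $SO$ via marginal-cost conditions --- which is exactly the step your route leaves open. If you want to keep your completion-based architecture, the cleanest fix is to import that exchange argument as the proof of your deferred feasibility claim.
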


\begin{proof}
First, note that by equations~\eqref{eq:demandcon}--\eqref{eq:flowcon_t}, the subflow $U \! E$ subflow, $f^{U \! E}_e$, satisfies flow conservation constraints.
Equation~\eqref{eq:demandcon} states that the flow along all reduced cost paths from origin $s$ to destination $t$ must be less then total demand for $(s,t)$.
Then equations~\eqref{eq:flowcon_s} and \eqref{eq:flowcon_t} state that the flow out of node $v$ must either be due to the demand generated by node $v$ or the flow into it, minus the flow that reaches $v$ as a destination.
Therefore, $f^{U \! E}_e$ is a subflow of a feasible flow.

What must be shown is that there must exist a $SO$ flow, $f^{SO}$, such that $f^{U \! E}_e \le f^{SO}_e$ for all $e$.
If $e$ is such that $l_e$ is strictly increasing at an $SO$ solution, and therefore will be strictly increasing at all $SO$ solutions by Lemma~\ref{lemma:constant_latency_SO}, then $f^{SO}_e = \bar{f}_e^{SO}$ and constraint \eqref{eq:bundle} guarantees the claim.
Let $E'$ be the set of links such that the latency function is constant at a $SO$ flow.
Therefore, it only needs to be shown that there exists a $SO$ solution, $f$, such that for $e \in E'$, $f^{U \! E}_e \le f^{SO}_e$.

Suppose that there existed a set of links $e \in E'$ such that for all $SO$ flows $f^{SO}$, $f^{U \! E}_e > f^{SO}_e$.
Let $\hat{f}^{SO}$ be an $SO$ flow.
Then there must exist an origin destination pair $(s,t)$ such that there are two sets of paths $\mathcal{P}_{>}, \mathcal{P}_{<} \subset \mathcal{P}_{s,t}$ for which for all $p \in \mathcal{P}_{>}$, $f^{U \! E}_p > \hat{f}^{SO}_p$, and for all $p' \in \mathcal{P}_{<}$, $f^{U \! E}_{p'} < \hat{f}^{SO}_{p'}$ and all paths only differ by links in $E'$.
This is because the total flow between any origin-destination is larger in the $SO$ flow by equation~\eqref{eq:demandcon}.
Moreover, $\sum_{p \in \mathcal{P}_{>}}(f^{U \! E}_p - \hat{f}^{SO}_p) \le \sum_{p' \in \mathcal{P}_{<}}(\hat{f}^{SO}_p - f^{U \! E}_p)$ since the flow along non-constant latency links constrains the total flow.
Move $\sum_{p \in \mathcal{P}_{>}}(f^{U \! E}_p - \hat{f}^{SO}_p)$ units of flow from paths in set $\mathcal{P}_{>}$ to paths in set $\mathcal{P}_{<}$ in the $SO$ flow $\hat{f}^{SO}$.
Denote the new flow by $f'$.
The total travel time for $f'$ cannot increase because the flow has only increased on constant latency links, and the new flow does not exceed $\bar{f}_e^{SO}$ on any link.
The total travel time also cannot have decreased because $\hat{f}^{SO}$ was an $SO$ flow, so $f'$ is also an $SO$ flow.
Continue this procedure until there does not exist a link $e \in E'$ for which $f^{U \! E}_e$ exceeds the transformed $SO$ flow.
Then we have constructed an $SO$ flow, $f$, for which for all links $e \in E$, $f^{U \! E}_e \le f_e$, a contradiction.
\end{proof}

\begin{lemma}
  \label{lemma:flow_minus_subflow}
  For a network $\{G,R\}$, let $f^*$ be a subflow of a feasible flow $f$. Then the flow $f'$ such that $f'_e = f_e - f^*_e$ is also a subflow of $f$.
\end{lemma}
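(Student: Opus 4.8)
The plan is to check the two defining clauses of ``subflow of $f$'' for $f'$ directly, taking as the witnessing demand vector the pointwise complement of the one that witnesses $f^*$.

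Since $f^*$ is a subflow of $f$, fix a demand vector $(r_{s,t})_{s,t\in V}$ with $0\le r_{s,t}\le R(s,t)$ for which the two divergence identities in the definition of subflow hold. I would define $r'_{s,t} := R(s,t)-r_{s,t}$; then $0\le r'_{s,t}\le R(s,t)$ is immediate. For the capacity clause, $0\le f^*_e\le f_e$ gives $0 \le f_e - f^*_e \le f_e$, i.e.\ $0\le f'_e\le f_e$ for every $e\in E$, as required. Note that no feasibility of $f'$ as a stand-alone flow is being claimed, only the two subflow clauses.

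It remains to verify the two divergence identities for $f'$ with the vector $(r'_{s,t})$. The key observation is that both sides of each identity are linear in the flow vector, so the net divergence of $f'=f-f^*$ at any node is the divergence of $f$ minus the divergence of $f^*$. By hypothesis the flow $f^*$ contributes $\sum_t r_{s,t}$ (resp.\ $\sum_s r_{s,t}$) on the right-hand sides. The flow $f$, being feasible, is itself a subflow of $f$ with the full demand in the role of $r$: by the path-flow characterization of feasibility ($\sum_{p\in\mathcal{P}_{s,t}} f_p = R(s,t)$) together with the standard vertex bookkeeping for a path decomposition, $f$ satisfies the divergence identities with $R(s,t)$ in place of $r_{s,t}$. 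Subtracting, $f'$ satisfies them with $R(s,t)-r_{s,t}=r'_{s,t}$, which exhibits $(r'_{s,t})$ as a valid witness and completes the verification.

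I do not anticipate a real obstacle here: the statement is a linearity-plus-complementation argument, and the only step that needs an explicit sentence is the claim that a feasible flow satisfies the node-divergence form of the conservation constraints with demand exactly $R$, which is the standard equivalence between the path and node formulations of a multicommodity flow. Everything else is arithmetic on the inequalities $0\le r_{s,t}\le R(s,t)$ and $0\le f^*_e\le f_e$.
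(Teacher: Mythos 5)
Your proposal is correct and follows essentially the same route as the paper: take $r'_{s,t}=R(s,t)-r_{s,t}$ as the witness, get $0\le f'_e\le f_e$ from the subflow inequalities, and use linearity of the node divergences together with the fact that the feasible flow $f$ satisfies them with demand $R(s,t)$. Your version is just slightly more explicit than the paper's in justifying that last fact via the path-flow characterization of feasibility.
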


\begin{proof}
  First, $0 \le f'_e \le f_e$, by the definition of a subflow.
  Now set $r'_{s,t} = R(s,t) - r^*_{s,t}$.
  Then for all $s,t \in V$, $\sum_{e \in out(s)} f'_e -\sum_{e \in in(s)} f'_e = \sum_t (R(s,t) - r^*_{s,t}) = \sum_t r'_{s,t}$, and similarly for $\sum_{e \in in(t)} f'_e -\sum_{e \in out(t)} f'_e$
\end{proof}

\begin{theorem}
  The optimal value of the $U \! E$ linear program for a network instance $\{G,R\}$ is the maximum amount of $U \! E$ agents that the network can support and achieve $SO$.
\end{theorem}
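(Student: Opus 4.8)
The plan is to prove the theorem by establishing two inequalities. Let $r^{LP}$ denote the optimal value of the $U\!E$ linear program and let $r^*_{U\!E}$ denote the true maximum amount of self-interested demand the network can tolerate while still admitting a Stackelberg assignment of compliant flow that induces $SO$. I would show $r^{LP} \ge r^*_{U\!E}$ and $r^{LP} \le r^*_{U\!E}$ separately.

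\textbf{Direction $r^{LP} \le r^*_{U\!E}$ (LP value is achievable).} Take an optimal solution of the $U\!E$ linear program, yielding a $U\!E$ subflow $f^{U\!E}$ with total demand $r^{LP}$. By Theorem~\ref{theorem:UE_flow_is_subflow}, $f^{U\!E}$ is a subflow of some $SO$ flow $f^{SO}$. Define the compliant flow by $f^{comp}_e = f^{SO}_e - f^{U\!E}_e$; by Lemma~\ref{lemma:flow_minus_subflow} this is itself a (feasible) subflow carrying exactly the residual demand $R(s,t) - r^*_{s,t}$ for each pair. I then argue this is a valid Stackelberg equilibrium inducing $SO$: the combined flow $f^{comp} + f^{U\!E} = f^{SO}$ has total latency equal to the $SO$ cost, so the system is optimal; and the self-interested agents are at equilibrium among themselves because, by constraint~\eqref{eq:non_reduced_zero_flow}, all $U\!E$ flow travels only on zero reduced cost links, hence on minimum-latency paths with respect to the realized link latencies $l_e(f^{SO}_e)$ — so no self-interested agent can unilaterally improve. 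Here I would invoke Lemma~\ref{lemma:constant_latency_SO} to note the realized latencies are the canonical $SO$ latencies, and the characterization of $U\!E$ (Lemma 2.2 of Roughgarden–Tardos) to conclude the self-interested subflow is a genuine user equilibrium given the compliant flow fixed.

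\textbf{Direction $r^{LP} \ge r^*_{U\!E}$ (no larger amount is tolerable).} Suppose some amount $r$ of self-interested demand can be tolerated, i.e.\ there is a compliant assignment whose Stackelberg equilibrium is an $SO$ flow $g^{SO}$, with the self-interested portion forming a subflow $g^{U\!E}$ of $g^{SO}$ carrying total demand $r$. I must show $g^{U\!E}$ (or a modification of it) is feasible for the $U\!E$ linear program, so that $r \le r^{LP}$. Flow conservation constraints~\eqref{eq:demandcon}–\eqref{eq:flowcon_t} and nonnegativity~\eqref{eq:nonnegative} hold because $g^{U\!E}$ is a subflow. For the bundle constraint~\eqref{eq:bundle}: $g^{U\!E}_e \le g^{SO}_e \le \bar f^{SO}_e$ since every $SO$ flow has link flows bounded by $\bar f^{SO}_e$ on constant-latency links and equal to $\bar f^{SO}_e = f^{SO}_e$ on strictly increasing links (by Lemma~\ref{lemma:constant_latency_SO}). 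The crucial and most delicate constraint is~\eqref{eq:non_reduced_zero_flow}: I need that all self-interested flow lies on zero reduced cost links. Since the self-interested agents are at user equilibrium given the compliant flow, each self-interested path is a minimum-latency path at latencies $l_e(g^{SO}_e)$, which by Lemma~\ref{lemma:constant_latency_SO} equal the canonical $SO$ latencies; and since $g^{SO}$ is an $SO$ flow, each used path is also a minimum marginal cost path (by Lemma 2.5 of Roughgarden–Tardos, the $SO$ characterization). Hence every link carrying self-interested flow is on a zero reduced cost path, i.e.\ in some $E^s_{RC}$, so~\eqref{eq:non_reduced_zero_flow} is satisfied. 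Thus $g^{U\!E}$ is $U\!E$-LP-feasible and $r \le r^{LP}$; taking the supremum over tolerable $r$ gives $r^*_{U\!E} \le r^{LP}$.

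\textbf{Main obstacle.} The subtle point — and the reason the example in Figure~\ref{fig:marg_cost_req} matters — is verifying constraint~\eqref{eq:non_reduced_zero_flow} in the converse direction, namely that user-equilibrium-among-self-interested-agents together with global system optimality really does force the self-interested flow onto paths that are \emph{simultaneously} minimum latency and minimum marginal cost. The latency condition comes directly from the $U\!E$ property, but the marginal cost condition is what ties the self-interested subflow to being a genuine \emph{subflow} of an $SO$ flow rather than merely latency-minimizing; I would need to argue carefully that if self-interested flow used a link not on any minimum-marginal-cost path, the combined flow could not be $SO$, contradicting the assumption. A secondary technical point is handling the non-uniqueness of $SO$ flows consistently across both directions, which is exactly what Lemmas~\ref{lemma:constant_latency_SO} and~\ref{lemma:identical_reduced_cost_paths} are designed to let me do.
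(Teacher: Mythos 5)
Your proposal is correct and follows essentially the same two-part structure as the paper's proof: achievability of the LP value via Theorem~\ref{theorem:UE_flow_is_subflow} and Lemma~\ref{lemma:flow_minus_subflow}, and maximality by showing any tolerable self-interested subflow satisfies the LP constraints (with the minimum-marginal-cost condition following from the $SO$ characterization and the minimum-latency condition from the $U\!E$ property). Your constraint-by-constraint verification in the converse direction is in fact somewhat more explicit than the paper's, which compresses that step into a single sentence.
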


\begin{proof}
First, by Theorem~\ref{theorem:UE_flow_is_subflow}, there exists an $SO$ flow such that the optimal $U \! E$ subflow, $f^{U \! E}$, is a subflow of the $SO$ flow, and by Lemma \ref{lemma:flow_minus_subflow}, there exists a subflow of compliant agents that can achieve the $SO$ solution.
Moreover, by the definition of the $U \! E$ linear program and Lemma~\ref{lemma:identical_reduced_cost_paths}, the $U \! E$ flow is only along zero reduced cost paths.
By the definition of zero reduced cost paths, all $U \! E$ agents are willing to take the assigned paths.
Therefore, the $SO$ solution is achievable with the $U \! E$ flow, and there is some volume of $U \! E$ flow that is equal to the objective of the $U \! E$ linear program.

Now, suppose that there was another $U \! E$ flow assignment, $f'$, for which compliant flow could be assigned in such a way that the $SO$ total system travel time was achieved and the total $U \! E$ flow volume was larger than the value returned by the $U \! E$ linear program.
Note that this flow assignment ($f'$) must be a subflow of some $SO$ flow, $f$.
Moreover, by the definition of $U \! E$ flow and the fact that all paths in a $SO$ solution are minimum marginal cost paths, all paths assigned with a $U \! E$ flow greater than zero must be a zero reduced cost path.
Therefore, the flow $f'$ satisfies the equations~\eqref{eq:demandcon}-\eqref{eq:nonnegative}, and since the $U \! E$ linear program returns the optimal $U \! E$ flow assignment, this is a contradiction.
\end{proof}


While we've demonstrated that we can compute the \emph{maximal} $U \! E$ flow that permits an $SO$ solution given the appropriate assignment of the compliant flow, it is likely that a more common problem would be to determine, for a given set of compliant agents, whether or not it is possible to achieve $SO$ with that set.
Our methodology also provides an answer to this question, as the following Corollary demonstrates.

\begin{figure*}[tbp]
\centering
\includegraphics[width=0.95\textwidth]{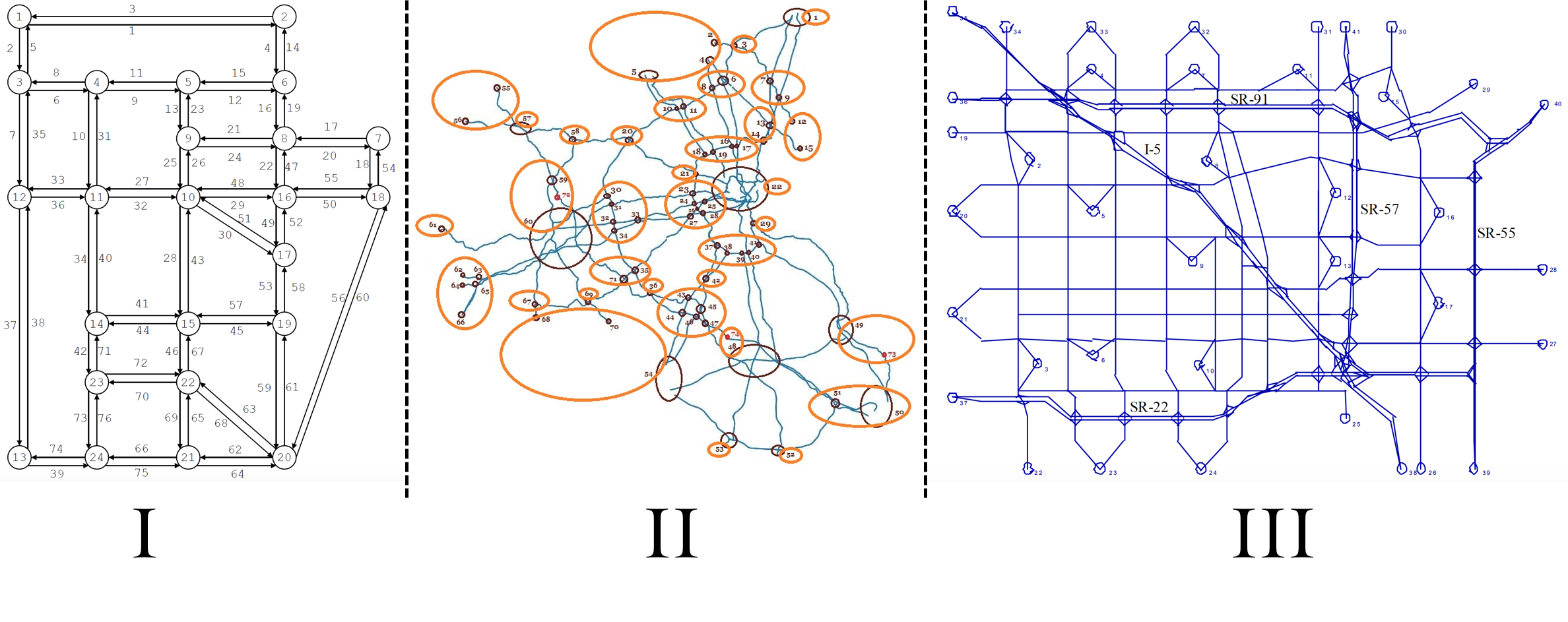}
\caption{Three representative network topologies: I - Sioux Falls, SD, II - Eastern Massachusetts (Ellipsoids represent different zones), III - Anaheim, CA.}
\label{fig:networks}
\end{figure*}

\begin{corollary}
  \label{corollary:subflow_exists}
  For a given network instance $\{G,R\}$ and given a set of compliant demand, $r^C_{s,t}$, from each origin destination pair $s , \, t \in V$, there exists a compliant flow $f^C$ such that the network achieves $SO$ if and only if there exists an $x^s_e$ for all $s \in V$ and $e \in E$ such that $r^{U \! E}_{s,t} = R(s,t) - r^C_{s,t}$ and $x^s_e$ are a solution to the $U \! E$ linear program.
\end{corollary}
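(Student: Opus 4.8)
The plan is to prove the two implications separately, in each case reducing everything to the machinery already in hand: Theorem~\ref{theorem:UE_flow_is_subflow} (a feasible $U\!E$ subflow is a subflow of an $SO$ flow), Lemma~\ref{lemma:flow_minus_subflow} (the complement of a subflow is again a subflow), and Lemma~\ref{lemma:identical_reduced_cost_paths} (the zero reduced cost paths are common to all $SO$ flows). Throughout I read ``$x^s_e$ are a solution to the $U\!E$ linear program'' as: setting $r^*_{s,t} := r^{U\!E}_{s,t}$, the pair $(r^*_{s,t},x^s_e)$ satisfies constraints~\eqref{eq:demandcon}--\eqref{eq:non_reduced_zero_flow}.

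For the ``if'' direction, assume such an $x^s_e$ exists. Then $f^{U\!E}_e := \sum_s x^s_e$ is a $U\!E$ subflow, so by Theorem~\ref{theorem:UE_flow_is_subflow} it is a subflow of some $SO$ flow $f^{SO}$. Define the candidate compliant flow $f^C_e := f^{SO}_e - f^{U\!E}_e$; by Lemma~\ref{lemma:flow_minus_subflow} this is a subflow of $f^{SO}$ routing exactly $r^C_{s,t} = R(s,t)-r^{U\!E}_{s,t}$ between each pair. It then remains to show that if the $SO$-controller deploys $f^C$ the induced state is $SO$. This splits into: (i) $f^{U\!E}$ is a legitimate $U\!E$ response of the self-interested agents to $f^C$ --- decompose each $x^s$ into $s$-rooted paths; constraint~\eqref{eq:non_reduced_zero_flow} forces every such path to use only links of $E^s_{RC}$, and a telescoping argument (sub-paths of minimum-latency, resp. minimum-marginal-cost, paths inherit that property) shows each such path is a zero reduced cost path, hence of minimum latency under $f^{SO}=f^C+f^{U\!E}$, so no self-interested agent gains by deviating; and (ii) the total system cost of $f^C+f^{U\!E}=f^{SO}$ equals the optimal value $c(f^{SO})$, i.e.\ $SO$ is attained.

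For the ``only if'' direction, assume some compliant flow $f^C$ routing $r^C_{s,t}$ induces $SO$; let $f^{U\!E}$ be the resulting self-interested ($U\!E$) flow and $f^{SO}:=f^C+f^{U\!E}$, which is an $SO$ flow. Then $f^{U\!E}$ routes demand $r^{U\!E}_{s,t}=R(s,t)-r^C_{s,t}$, so decomposing it by source into variables $x^s_e$ makes \eqref{eq:flowcon_s}--\eqref{eq:flowcon_t} hold, and \eqref{eq:demandcon} holds since $r^{U\!E}_{s,t}\le R(s,t)$. Every path carrying positive $U\!E$ flow is a minimum-latency path (by the $U\!E$ condition relative to $f^{SO}$) and, because $f^{SO}$ is $SO$, also a minimum-marginal-cost path; hence it is a zero reduced cost path and all its links lie in $E^s_{RC}$, giving \eqref{eq:non_reduced_zero_flow}. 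Finally $f^{U\!E}_e\le f^{SO}_e\le\bar f^{SO}_e$ gives the bundle constraint~\eqref{eq:bundle}, so $(r^{U\!E}_{s,t},x^s_e)$ is feasible for the $U\!E$ linear program.

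I expect the real obstacle to be sub-steps (i)--(ii) of the ``if'' direction: the $U\!E$-controller's response to a fixed $f^C$ need not be unique, so merely exhibiting the particular response $f^{U\!E}$ that completes to $f^{SO}$ is, a priori, not enough. The clean remedy is to note that with $f^C$ frozen the self-interested agents face another non-atomic routing game, so by the standard variational-inequality argument all of its user equilibria have the same total system cost; since one equilibrium ($f^{U\!E}$) attains $c(f^{SO})$ --- the global optimum --- every $U\!E$ response does, and $SO$ holds no matter which equilibrium the follower selects. (Under the optimistic Stackelberg convention, where the leader may pick among the follower's best responses, this subtlety vanishes and exhibiting $f^{U\!E}$ already suffices.)
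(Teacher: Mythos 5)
Your proof is correct and follows essentially the same route as the paper's: Theorem~\ref{theorem:UE_flow_is_subflow} plus Lemma~\ref{lemma:flow_minus_subflow} give the ``if'' direction, and direct verification of constraints \eqref{eq:demandcon}--\eqref{eq:non_reduced_zero_flow} for the source-decomposed $U\!E$ flow gives the ``only if'' direction. Your two additions---the telescoping argument showing that any path confined to $E^s_{RC}$ is itself a zero reduced cost path, and the essential-uniqueness argument showing that every $U\!E$ response to the exhibited $f^C$ attains the $SO$ cost---are both sound and fill in details the paper leaves implicit (the first only in the remark following Definition~\ref{def:zero_reduced}, the second not addressed at all).
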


\begin{proof}
  By Theorem~\ref{theorem:UE_flow_is_subflow}, any solution to the $U \! E$ linear program defines a subflow of an $SO$ flow.
  Therefore, if $r^{UE}_{s,t}$ and $x^s_e$ is a solution, there exists an assignment of the compliant flow that achieves $SO$.

  Moreover, if there exists an assignment of the complaint flow, $f^C$, such that a $U \! E$ subflow with demands $r^{UE}_{s,t}$ achieves system optimum, then the $U \! E$ flow is only along zero reduced cost paths by definition of $U \! E$ flow and $SO$, and the $U \! E$ subflow is feasible.
  Therefore, the decomposed $U \! E$ flow satisfies the constraints of the linear program.
\end{proof}

\section{Flow Assignment for Compliant Agents}


Given that we can now determine both the maximal amount of $U \! E$ flow that a system can tolerate and achieve system optimum and, for a given set of compliant agents, whether or not a system can achieve optimum, we are only left with assigning the compliant flow to paths.
This section tackles the question of how to assign paths to a, sufficiently large, set of compliant agents such that $SO$ is achieved.



The methodology from the previous section immediately suggests a solution.
Given a network instance $\{G,R\}$, suppose that we have compliant demand equal to $r^C_{s,t}$ for all $s, \,t \in V$.
Then we must find a $SO$ flow, $f^{SO}$, such that $r^C_{s,t}$ and $r^{U \! E}_{s,t} = R(s,t) - r^C_{s,t}$ permit subflows of the $SO$ solution.
Such a $SO$ flow must exist by Theorem~\ref{theorem:UE_flow_is_subflow} and Corollary~\ref{corollary:subflow_exists}.

The first step is to compute the $U \! E$ subflow, $f^{U \! E}$, given $U \! E$ demand.
From the previous section: this exists and is computationally tractable.
Any feasible subflow, $f^C$, with demand $r^C_{s,t}$ such that the total flow along link $e$ satisfies $f_e^C+f_e^{U \! E} \le \bar{f_e}^{SO}$ has latency equal to the $SO$ solution, and the flow $f_e^C + f_e^{U\!E}$, by Lemma~\ref{lemma:constant_latency_SO}, is an $SO$ solution.

We can compute $f^C$ with the following linear program:
\begin{align*}
&\qquad \qquad \qquad \qquad \quad \max_{f^C_e} ~ 1&&  \nonumber \\
&\textit{subject to} && \nonumber\\
&\sum_{e \in out(v)} f^C_e - \sum_{e \in in(v)} f^C_e = \sum_{t} (r^C_{v,t}) &&  \forall \, v \in V \\ 
&\sum_{e \in in(v)} f^C_e - \sum_{e \in out(v)} f^C_e= \sum_{s} (r^C_{s,v}) &&  \forall \, v \in V \\
&0 \le f_e^C \le \bar{f}_e^{SO} - f^{U \! E}_e &&\forall \, e \in E 
\end{align*}
We know that a solution to the above linear program exists and it can be computed tractably.

The final step is to decompose the compliant flow, $f^C$, into a per path assignment for each origin-destination pair $(s,t)$ in order to assign individual agents to a path.
This can be done in time $O(|V||E|)$ using standard flow decomposition algorithms (see Section 3.5 of Ahuja, Magnanti, et. al. (\citeyear{Ahuja.Magnanti.ea-1993}) for a discussion).





\section{Experimental Results}

\begin{table*}[t]
\centering
\begin{small}
\begin{tabular}{|l|r|r|r|r|r|r|r|r|r|}
  \multicolumn{1}{|c|}{Scenario}	& \multicolumn{1}{|c|}{Vertices}	&	\multicolumn{1}{|c|}{Links}	& \multicolumn{1}{|c|}{Zones}	&	\multicolumn{1}{|c|}{Total Flow}	 &	\multicolumn{1}{|c|}{$U \! E$ TTT}	&	\multicolumn{1}{|c|}{$SO$ TTT}	& \multicolumn{1}{|c|}{\% Improve} &	\multicolumn{1}{|c|}{Threshold}	&	\multicolumn{1}{|c|}{\% compliant}	\\
	\hline
Sioux Falls				&	24		&	76		&	24		&	360,600			&	7,480,225
		&	7,194,256	& 3.82	&	6.19E-11	&	\textbf{13.04}	\\
Eastern MA   &   74 		&   258 	&   74 		&   65,576   	&   28,181 		&   27,323  	& 3.04	&   3.04E-13 	&   \textbf{19.73} \\
Anaheim					&	416		&	914		&	38		&	104,694		&	1,419,913 &	1,395,015 & 1.75 &	8.05E-11	&	\textbf{19.76}	\\
Chicago S			&	933		&	2950	&	387		&	1,260,907	&	18,377,329
		&	17,953,267 & 2.31
		&	9.14E-10	&	\textbf{27.29}	\\
Philadelphia			&	13389	&	40003 	&	1525	&	18,503,872	&   335,647,106
	&	324,268,465 & 3.39
	&	4.20E-09 	&	\textbf{49.59}	\\
Chicago R		&	12982	&	39018	&	1790	&   1,360,427	&	33,656,964
		&	31,942,956 & 5.09
  	&	4.14E-07	&	\textbf{53.34}	\\
\end{tabular}
\end{small}
\caption{Required fraction of compliant agents given as ``\% compliant" for different scenarios along with network specifications for each scenario: number of vertices, links and zones followed by the Total Travel Time (TTT) at $U \! E$ (0\% compliant agents) and $SO$ (100\% compliant agents). The percentage of improvement of the $SO$ TTT over the $U \! E$ TTT is given as ``\% improve". }
\label{Tab:res}
\end{table*}

We are interested in the viability of \emph{opt-in} micro-tolling schemes to more efficiently utilize road networks.
As such, we haven undertaken an empirical study to investigate the minimal amount of compliant flow required for $SO$ ($r^*_{U \! E}$) in six realistic traffic scenarios over actual road networks.

\subsection{Scenarios}


Each traffic scenario is defined by the following attributes:

\begin{enumerate}
\item The road network, $G(V,E)$, specifying the set of vertices and links where each link is affiliated with a length, capacity and speed limit. Networks are, following standard practice, partitioned into traffic analysis zones (TAZs) and each zone contains a node belonging to $V$ called the centroid. All traffic originating and terminating within the zone is assumed to enter and leave the network at centroids.

\item A trip table which specifies the traffic demand between pairs of centroids. The demand function $R$ between nodes other than centroids is set to zero. 
\end{enumerate}

The following benchmark scenarios were chosen both for their diversity of topology and traffic volume and their widespread use within the traffic literature: Sioux Falls, Eastern Massachusetts, Anaheim, Chicago Sketch, Philadelphia, and Chicago-regional. All traffic scenarios are available at: \url{https://github.com/bstabler/TransportationNetworks}. Figure~\ref{fig:networks} depicts three representative network topologies (the three smallest networks).

\subsection{The Traffic Model}
\label{sec:macro}

A macroscopic model was used in order to evaluate traffic formation.
Macroscopic models calculate the $U \! E$ in a given scenario using algorithm B~\cite{dial2006path}. For all scenarios, the model assumed that travel times follow the \textit{Bureau of Public Roads} (BPR) function~\cite{moses2017calibration} with the commonly used parameters $\beta = 4$, $\alpha = 0.15$. 
The $SO$ solution is computed by replacing the latency functions with $c_e'(x)$ and using algorithm B to obtain the equilibrium solution \cite{dial1999part1}.
Since solving for the $U \! E$ and $SO$ solutions requires solving a convex program \cite{dial2006path}, we only solve them to a certain precision.
To measure convergence, given an assignment of agents to paths, we define the average excess cost (AEC) as the average difference between the travel times on paths taken by the agents and their shortest alternative path.
The algorithm terminates when the AEC is less than 1E-12 minutes (except for Chicago-regional for which 1E-10 was used due to the size of the network).
Therefore, a minimum marginal cost path is only a minimum up to a threshold.


A link $e$ is defined to be zero reduced cost with respect to $s$ if it carries flow originating at $s$ in the SO solution (i.e., the link belongs to a minimum marginal cost path) and if the difference between the least latency path that include $e$ and the least latency unrestricted path, both leading from $s$ to the head vertex of $e$, is less than a threshold $T$. 

The threshold $T$ is defined as follows. for each origin $s$ and link $e$ we calculate the least marginal cost path ($c'$) leading from $s$ to the head vertex of $e$ at the $SO$ solution. We do this once while restricting the path to include $e$ and once without such restriction. The difference between these two values is stored and $T$ is set to be the maximum of these difference across all the links and origins in the network.



\subsection{Results}
\label{sec:res}

Table \ref{Tab:res} presents the percentage of flow that must be compliant in order to guarantee an $SO$ solution for six different traffic scenarios.
Each scenario is affiliated with the number of vertices, links, and zones comprising the affiliated road network as well as the number of trips that make up the affiliated demand.

The columns ``$U \! E$ TTT" and ``$SO$ TTT" represent the total travel time (in minutes) over all agents for the case where 100\% of the agents are controlled by the $U \! E$ controller ($U \! E$ solution) and when 100\% of the agents are controlled by the $SO$ controller ($SO$ solution) respectively.
The percentage of improvement in total travel time between $U \! E$ TTT and $SO$ TTT is also shown under ``\% improve".

The percentage of required compliant flow (formally $r^*_{U \! E}/|R|$ where $|R|=\sum_{s,t} R(s,t)$) as computed by the $U \! E$ linear program (Definition~\ref{definition:UE_LP}) is presented for each scenario under ``\% compliant".\footnote{Statistical analysis for Table \ref{Tab:res} is not presented, as the macroscopic model is deterministic.}

The results suggest that as the size of the network (i.e., the number of nodes and vertices) increases, a greater fraction of compliant travelers are needed to ensure the network achieves system optimum.
This appears to be due to an increasing number of used paths at the $SO$ solution as the network size increases.
As the number of paths grow, the set of zero reduced cost paths grows more slowly, and, therefore, a higher percentage of compliant agents is required. 

\section{Summary}
This paper discussed a scenario where a set of agents traverse a congested network, and a centralized network manager optimizes the flow (minimizes total latency) using a set of compliant agents.
A methodology was presented for computing the minimal volume of traffic flow that needs to be compliant in order to reach a state of optimal traffic flow.
Moreover, the methodology extends to inferring which agents should be compliant and how exactly the compliant agents should be assigned to paths.
Experimental results demonstrate that the required percentage of agents that are compliant is relatively small (between 13\% and 53\%) for several realistic road networks.


Going forward, it would be worthwhile to explore the possibility of approximation algorithms for assigning compliant flow when the $U \! E$ demand is too large to achieve system optimum.
Given that the optimal solution to this problem is known to be NP-hard, an efficient approximation algorithm would be a useful tool as opt-in network routing systems are implemented.
Further, in order to limit the necessary opt-in incentives, there is work needed to develop systems that target particularly influential users to opt-in to these systems.

\subsection{Acknowledgements}
\label{sec:acknowledgements}

The authors would like to thank Josiah Hanna and Michael Levin for contributing useful comments and discussions in the course of this research.

A portion of this work has taken place in the Learning Agents Research
  Group (LARG) at the Artificial Intelligence Laboratory, The University
  of Texas at Austin.  LARG research is supported in part by grants from
  the National Science Foundation (CNS-1305287, IIS-1637736,
  IIS-1651089, IIS-1724157), The Texas Department of Transporation,
  Intel, Raytheon, and Lockheed Martin.  Peter Stone serves on the Board
  of Directors of Cogitai, Inc.  The terms of this arrangement have been
  reviewed and approved by the University of Texas at Austin in
  accordance with its policy on objectivity in research.

The authors would also like to acknowledge the support of the Data-Supported Transportation Operations \& Planning Center and the National Science Foundation under Grant No.\ 1254921.

\bibliographystyle{aaai}
\bibliography{traffic}

\end{document}